\newcommand\mytoday{\number\year-\ifcase\month\or 01\or 02\or 03\or 04\or 05\or 06\or 07\or 08\or 09\or 10\or 11\or 12\fi-\ifcase\day\or 01\or 02\or 03\or 04\or 05\or 06\or 07\or 08\or 09\or 10\or 11\or 12\or 13\or 14\or 15\or 16\or 17\or 18\or 19\or 20\or 21\or 22\or 23\or 24\or 25\or 26\or 27\or 28\or 29\or 30\or 31\fi} %see LaTeX book pg. 348
\newcolumntype{d}[2]{D{.}{.}{#1.#2}} %aligns the entry at "." of a column if "d" is used as column type where the arguments specify the number of digits to the left and right for which space is kept in a column (always set to the maximal numbers that appear in the whole table)
\newcommand*{\abstractnoindent}{} %define abstract such that it has no indent
\let\abstractnoindent\abstract
\renewcommand*{\abstract}{\let\quotation\quote\let\endquotation\endquote
\abstractnoindent}
\renewcommand{\p@enumii}[1]{\theenumi(#1)}
\theoremstyle{break} %adds a newline after the heading of a theorem environment
\newtheorem{definition}{Definition}[section] %number all environments in a sequence (for every section) (order does not play a role)
\newtheorem{theorem}[definition]{Theorem}
\newtheorem{proposition}[definition]{Proposition}
\newtheorem{remark}[definition]{Remark}
\newtheorem{algorithm}{Algorithm}
\theoremstyle{nonumberbreak} %adds a newline after the heading of a theorem environment which is not numbered
\newtheorem{proof}{Proof}
\newcommand*{\IP}{\mathbb{P}}
\newcommand*{\IE}{\mathbb{E}}
\newcommand*{\IR}{\mathbb{R}}
\newcommand*{\IN}{\mathbb{N}}
\newcommand*{\I}[1]{\textbf{1}_{#1}}
\newcommand*{\F}{\mathcal{F}}
\begin{document}
\thispagestyle{plain}
	\begin{center}
	% Title for the Journal Submission
	%	{\bfseries\Large On the Markovianity of multi-variate default indicators \linebreak and its impact on simulation for economics and finance}
  % Title for the working paper version
	{\bfseries\Large Consistent iterated simulation of multi-variate default times: a Markovian indicators characterization}

		\par\bigskip
		\vspace{0cm}
		{\Large Damiano Brigo}\\\vspace{0.2cm}
			{Dept of Mathematics, Imperial College London}\\
		{180 Queen's Gate, London}\\
		{email: damiano.brigo@imperial.ac.uk}\\		
    \vspace{0.5cm}
		{\Large Jan-Frederik Mai}\\
		\vspace{0.2cm}
		{XAIA Investment GmbH,}\\
		{Sonnenstr.\ 19, 80331 M\"unchen, Germany,}\\
		{email: jan-frederik.mai@xaia.com,}\ \ {phone: +49 89 589275-131}\\
    \vspace{0.5cm}
    {\Large Matthias Scherer}\\
		\vspace{0.2cm}
		{Chair of Mathematical Finance,}\\
		{Technische Universit{\"a}t M{\"u}nchen,}\\
		{Parkring 11, 85748 Garching-Hochbr\"uck, Germany,}\\
		{email: scherer@tum.de,}\ \  {phone: +49 89 289-17402} \\ \hspace{1cm} \\ \hspace{1cm} \\ First Version:  June 7, 2013. This Version: \today
	\end{center}

\begin{abstract}
We investigate under which conditions a single simulation of joint default times at a final time horizon can be decomposed into a set of simulations of joint defaults on subsequent adjacent sub-periods leading to that final horizon. Besides the theoretical interest, this is also a practical problem as part of the industry has been working under the misleading assumption that the two approaches are equivalent for practical purposes. As a reasonable trade-off between realistic stylized facts, practical demands, and mathematical tractability, we propose models leading to a Markovian multi-variate survival--indicator process, and we investigate two instances of static models for the vector of default times from the statistical literature that fall into this class. On the one hand, the ``looping default'' case is known to be equipped with this property at least since \cite{herbertsson08,bielecki11}, and we point out that it coincides with the classical ``Freund distribution'' in the bivariate case. On the other hand, if all sub-vectors of the survival indicator process are Markovian, this constitutes a new characterization of the Marshall--Olkin distribution, and hence of multi-variate lack-of-memory. A paramount property of the resulting model is stability of the type of multi-variate distribution with respect to elimination or insertion of a new marginal component with marginal distribution from the same family.  The practical implications of this ``nested margining'' property are enormous. 
To implement this distribution we present an efficient and unbiased simulation algorithm based on the L\'evy-frailty construction. We highlight different pitfalls in the simulation of dependent default times and examine, within a numerical case study, the effect of inadequate simulation practices.
%We prove that a $d$-dimensional random vector has a Marshall--Olkin distribution if and only if the associated $d$-dimensional indicator process and its marginal processes are time-homogeneous Markov chains. This result has interesting implications for the current practice of economic scenario generation, where default times are often simulated on a time grid using some algorithm which is applied iteratively. Our results imply that this procedure is only valid whenever the random vector of default times is a Marshall--Olkin distribution. In a numerical case study, we investigate the bias of this procedure for alternative multivariate distributions. Moreover, we indicate how the Marshall--Olkin law serves as building block for more complex models.
\end{abstract}

%\bigskip

Classification Codes: {\bf AMS}  60E07, 62H05, 62H20, 62H99;  \ 
{\bf JEL} C15, C16.

\noindent {Keywords:} \ \ Stepwise default simulation, default modeling, credit modeling, default dependence, default correlation, default simulation, arrival times, credit risk, Marshall--Olkin distribution, nested margining, Freund distribution, looping default models.

\section{Introduction}
The increasingly global nature of financial products and risks is calling for adequately complex stochastic models and simulation procedures, often involving thousands of risk factors that can be different in nature. This is required for valuation purposes and for risk measurement. Investment banks and financial services companies are devoting a sizable effort to designing software and hardware architectures that support such global simulations effectively, see, e.g. \cite{albanese2011}. The path-dependent nature of many risks and the necessity to analyze risks at different time horizons lead to an iterated simulation of all risk factors across time steps. The way to consistently represent statistical dependence for default times in each single step of the simulation is the main motivation of this paper. When simulating default times over a final finite horizon two approaches are possible, broadly speaking: 
\begin{itemize}
\item[(i)] Simulate each default time once and for all in each given scenario, and store its value while the other risk factors are iterated through time in that scenario up to the final simulation horizon, so as to properly account for default when the time comes in that specific path. 
\item[(ii)] Alternatively, one may decide simply to simulate a ``default/no default'' indicator at each time step of the common iteration for all risk factors, the indicator flaring up in the specific step, before the final horizon, where default occurs, or not flaring at all if no default occurs prior to the final horizon. 
\end{itemize}
On the one hand, the mathematical underpinning -- if any -- for company-wide, global simulations of default is often a static model for the vector of default times, usually a copula-based ansatz. Such a model originates from the statistical literature and renders the approach (i) more natural. On the other hand, when dealing with large portfolios, the literature in financial risk management mostly prefers models relying on a repeated evolution of risk factors on common time grids.  The approach (ii) is more consistent with this way of thinking and therefore more desirable both from a theoretical and practical point of view when dealing with large portfolios, as we will further explain below. Hence the choice between (i) and (ii) originates the question: What are convenient conditions on the multi-variate distribution of the default times such that the two approaches above are consistent? The present article dares an important step towards bridging the gap between the static modeling of default times and the dynamic modeling of default indicators, deriving en passant new economic interpretations for two families of multivariate distributions that have been considered under a different focus in the statistical and applied literature in the past. We are going to stress that for most stochastic models the approach (ii), while aligning default simulation with other risk factors simulations along time, is far more complicated than (i), because conditional multi-variate survival probabilities are complicated objects in general. Finding statistical models for the default times that allow for a convenient implementation of (ii) is related to a multi-variate notion of lack-of-memory and is important for multiple reasons:
\begin{itemize}
\item \textbf{Software consistency with ``Brownian-driven'' asset classes}: Consider a bank that runs a global simulation on a large portfolio, including complex products and defaults, in order to obtain a risk measure. One example would be computing the value at risk or the expected shortfall of CVA, a task that is numerically very intensive, see, e.g. \cite{BrigoPallaMorini}. In this context, there is need to evolve risk factors according to controlled time steps that are common to all factors, to have all required variables at each step of the simulation. While this is relatively natural for asset models that are driven by Brownian type processes and even extensions with jumps, it becomes harder when trying to include default of underlying entities or counterparties. The reason for this is that default times, typically represented through intensity models, should be simulated just once, being static random variables as opposed to random processes. Once simulated, there would be nothing left to iterate. However, the consistency of the global simulation and the desire to have all variables simulated at every step is prompting the design of iterated survival or default flags across the time steps that are already used in the simulation of more traditional assets. 
\item \textbf{Basel III requirement for risk horizons:} A further motivation for iterating the global simulation across standard time steps is coming from the Basel III framework \cite{baselannex4} when trying to address liquidity risk. BIS suggests the following solution:\par \emph{``The Committee has agreed that the differentiation of market liquidity across the trading book will be based on the concept of liquidity horizons. It proposes that banks' trading book exposures be assigned to a small number of liquidity horizon categories. [10 days, 1 month, 3 months, 6 months, 1 year]. The shortest liquidity horizon (most liquid exposures) is in line with the current 10-day VaR treatment in the trading book. The longest liquidity horizon (least liquid exposures) matches the banking book horizon at one year. The Committee believes that such a framework will deliver a more graduated treatment of risks across the balance sheet.
Among other benefits, this should also serve to reduce arbitrage opportunities between the banking and trading books.''}
\par
It is clear then that a bank will need to simulate the risk factors of the portfolio across a grid including the standardized holding periods above. In this sense it will be practical to simulate all variables, including defaults and survivals, in the common time steps. Software architecture and the possibility to effectively decompose the simulation across steps, prompt to the possibility to iterate the default simulation rather than trying to simulate random default times just once.
\item \textbf{Rectifying existing market practice:} Part of the industry has been iterating dependence structures of static multi-variate default times across common time steps. While for single exponentially distributed random variables the lack-of-memory property allows to do so, for the dependence structure to be iterated one needs a meaningful multi-variate extension of the lack-of-memory property. This problem has been addressed initially in \cite{BrigoChourdakisSelfChaining}, who provide conditions for consistency of the two approaches when the grid is the same for all risk factors, but only in a partial way (as we will explain more in detail below). 
\item \textbf{General need for dependence modeling in the context of the current counterparty credit risk debate:} As an example, the current debate on valuation adjustments (as the partly overlapping credit CVA, debit DVA, and funding ``FVA'' adjustments, see e.g. \cite{BrigoPallaMorini}), is forcing financial institutions to run global simulations over very large portfolios. By nature, CVA is an option on a very large portfolio containing the most disparate risk factors. A key quantity in valuing this option is the dependence between the default of a counterparty and the value of the underlying portfolio that is traded with that counterparty. When such dependence takes its worst possible value for the agent making the calculation we have wrong way risk (WWR), a risk that is at the centre of the agenda of the Bank of International Settlements (BIS) in reforming current regulation. %However, the tools that have been put forward by BIS in the past, such as to the so called "bond equivalent approach", have often been criticized as inadequate, and not only in relation to CVA, see for example \cite{oecd,brigofaq,BrigoPallaMorini}. Some such methodologies would suggest to do calculations assuming there is no wrong way risk, or in other terms no dependence. Dependence would then be accounted for a posteriori through multipliers and tables. In several published papers in the period 2005-2012, summarized in \cite{brigofaq,BrigoPallaMorini}, Brigo and co-authors illustrated a number of examples, across asset classes, showing clearly that this does not work, as there are no magic numbers and tables that may account for such complex risks in a standard way. There is no quick fix, dependency needs to be modeled properly and accounted for in detail. 
Modeling the dynamics of dependence is not only essential for the current emergencies of the industry, such as CVA/DVA/FVA and risk measures on these quantities, but it is also necessary for the management of pure credit products, such as, e.g., Collateralized Debt Obligations (CDO).
\end{itemize}
We aim at increasing awareness of the fact that the stepwise simulation of default indicators ((ii) above) is a hard task in general, and in particular that the practical implementation is not feasible without huge efforts (both theoretical and computational). 

\medskip

The  paper is organized as follows.
\par
To be able to technically discuss the single step versus multi step simulation of default times, we study in Section \ref{sec:markovind} the Markovianity for a vector of survival indicators. We explain that if this property is not there, the simulation is very difficult in general and infeasible for global simulations on a company-wide level. We then adopt the Markov assumption. This solves a number of problems and is still flexible enough to include the looping default model, as has already been shown in \cite{herbertsson08,bielecki11}. We bridge the gap to classic literature by pointing out that this leads to the Freund distribution for the bivariate default time statistics, see \cite{freund61}. Multivariate extensions in turn lead to easy simulation through matrix exponentials.
\par
In Section \ref{sec:newmarshallolkin} we illustrate how even the solution based on a  Markovian vector of survival indicators retains some problems.
In particular, portfolio re-balancing issues and lack of nested marginalization are undesired properties. We then show that problems of the Markovian version are solved if we also request that all sub-vectors of indicators are Markovian. This leads to the main inner-mathematical innovation of the paper: the Markov property for sub-vectors is  equivalent to have a Marshall--Olkin distribution for the multi-variate default times.
We provide an unbiased simulation scheme for Marshall--Olkin distributions, and discuss efficient Marshall--Olkin parameterizations. In particular, we review the L{\'e}vy-frailty model, possibly with factor structures, which is perfectly suited for global simulations.
\par
In Section \ref{sec:casestudy} we return to the original question of this introduction, namely the consistency of approaches (i) and (ii) above for simulation, and re-discuss the paper\linebreak \cite{BrigoChourdakisSelfChaining}, showing why its analysis is only partial as it assumes homogeneous time steps and further it focuses on the univariate indicator of a vector of survivals rather than on the vector of survival indicators. We see how splitting marginal distributions  and dependence structure is not always a good idea. The further point is made that if one simulates different default times with different time steps, then even self-chaining/extreme value copulas advocated in \cite{BrigoChourdakisSelfChaining} cannot be iterated and the only possibility is the Marshall--Olkin distribution.
\par
The final section concludes the paper. 
\section{Markovian survival indicator processes}\label{sec:markovind}
We consider a random vector of default times $(\tau_1,\ldots,\tau_d)$ and its associated indicator process $\textbf{Z}(t)=(\I{\{\tau_1>t\}},\ldots,\I{\{\tau_d>t\}})$, formally defined on $(\Omega,\F,\IP)$. We face the task of simulating a path of $\textbf{Z}$ along an equidistant grid with length $\Delta$, i.e.\ the sequence $(\textbf{Z}(0), \textbf{Z}(\Delta), \textbf{Z}(2\,\Delta),\ldots)$. In the sequel it will be convenient to identify the state space $\{0,1\}^d$ of $\textbf{Z}$ with the power set of $\{1,\ldots,d\}$ via the bijection 
\begin{gather}
h(I):=(\I{\{1 \in I\}},\ldots,\I{\{d \in I\}}), \quad I \subset \{1,\ldots,d\}.
\label{bijectionh}
\end{gather}
In order to carry out the simulation in a stepwise manner, in step $k$ of the simulation we have to simulate $\textbf{Z}(k\,\Delta)$ from the discrete distribution 
\begin{gather*}
\Big( P_{\textbf{Z}((k-1)\,\Delta), h(J)}\big[{(k-1)},{\F_{(k-1)\,\Delta}}\big] \Big)_{J \subset \{1,\ldots,d\}},
\end{gather*}
where 
\begin{gather*}
P_{h(I),h(J)}\big[{(k-1)},{\F_{(k-1)\,\Delta}}\big]:=\IP\big(\textbf{Z}(k\,\Delta)=h(J)\,\big|\,(\textbf{Z}((k-1)\,\Delta)=h(I),\F_{(k-1)\,\Delta} \big),
\end{gather*}  
with $\F_t$ being the $\sigma$-algebra of all available information at time $t$. In the sequel we list several issues demonstrating why this procedure is a very hard task. In general, the transition probabilities depend on the $\sigma$-algebra $\F_t$ generated by a battery of risk factors. This causes the following problems:
\begin{itemize}
\item[(a)] In reality, default risk is correlated with risk factors of other asset classes such as, for example, equity derivatives. The development of such a global model requires huge efforts and is therefore typically not implemented in practice. In particular, such a design requires different departments of a financial institution to work together, which might be infeasible. It is common to split the business into several sections and every section models their specific risk factors with an appropriate level of sophistication. Typically, these levels do not have a common denominator. For instance, it is likely that a swap desk uses a stochastic interest rate model, whereas a credit desk uses deterministic interest rates and focuses on the stochastic evolution of credit spreads instead. On a global level, these two approaches are inconsistent of course.
\item[(b)] The transition probabilities might not be easy to compute. Typically, there do not exist closed form expressions for them, and numerical integration techniques - if available at all - become time-consuming and difficult to implement.
\item[(c)] If the transition probabilities depend on the whole histories of certain risk factors, then these paths have to be stored, leading to a critical algorithm, especially for large dimensions. This already applies if $\F_t$ only depends on the history of $\textbf{Z}$, e.g.\ in case the timing of previous defaults effects future defaults.  
\end{itemize}
Furthermore, even if we drop the dependence on $\F_t$, the dependence of the transition probabilities on the time step $k$ still might cause serious practical problems:
\begin{itemize}
\item[(d)] Overparameterization: For each time step $k$ we have to deal with a whole matrix of transition probabilities. Especially for thin grids and large portfolios, this becomes a challenging issue.
\item[(e)] The number of time steps, and hence the number of parameters, depends on the grid length $\Delta$. In case we need to run simulations for several different $\Delta$ (e.g., daily, weekly, monthly according to Basel III requirements), we have to re-design the algorithm each time, because it is not $\Delta$-independent.
\end{itemize}
To circumvent these difficulties, a convenient trade-off between realism and tractability is the assumption of $\textbf{Z}$ being a continuous-time Markov chain, i.e.\ a time-homogeneous Markov process. In particular, this implies that
\begin{gather*}
P_{h(I),h(J)}\big[{(k-1)},{\F_{(k-1)\,\Delta}}\big] = P_{h(I),h(J)}\big[\Delta\big],
\end{gather*}
depends only on $\Delta$. Indeed, this restriction resolves all issues (a)--(e). On a first glimpse, this assumption appears to be restrictive. For instance, it implies that we choose $(\tau_1,\ldots,\tau_d)$ from a certain finite parametric family of distributions, since continuous-time Markov chains on a $2^d$-dimensional state space are determined by a $2^d \times 2^d$-dimensional intensity matrix. 

However, the earlier references \cite{herbertsson08,bielecki11} study models of this type in quite some detail and show how to make them workable in the context of credit derivative pricing. The focus of part of the present article is to bridge the gap between the statistical literature, which defines $(\tau_1,\ldots,\tau_d)$ by a static model, and the dynamic Markov chain setup introduced above, which has been investigated in \cite{herbertsson08,bielecki11}.
\subsection{Reconsidering the looping default model}
One of the most intuitive models for contagion effects in portfolio credit risk is the so-called ``looping default'', the terminology being introduced in one of the first works on counterparty credit risk pricing by \cite{jarrow01}. The intuition of this model is that companies have an initially constant hazard rate, but a default event of one company changes hazard rates of the surviving companies. Despite the looping default model being an intuitively reasonable approach, it turns out that constructing a well-defined probability space supporting such a multi-variate distribution is surprisingly difficult. When writing down the canonical construction of default times in classical intensity-based models there is a recursive dependence of one default time on the other default times. \cite{jarrow01} resolve this issue by simplifying the model to a case when the involved companies are split into two groups and only the defaults of group $\mathcal{A}$ cause changes of the hazard rates in group $\mathcal{B}$, but not vice versa, which is no longer a real looping default model. However, the problem has been investigated further in subsequent articles and finally was resolved by \cite{yu07} who constructs the looping default using the so-called ``total hazard construction'', which originates from the statistical literature, see \cite{norros86,shaked87}. The total hazard construction defines a $d$-dimensional random vector of default times as a function of $d$ independent random variables, such that the corresponding default intensities satisfy certain relations that are specified a priori. However, this construction algorithm is rather complicated to implement in practice, and in particular has no natural coherence with stepwise simulation - rendering it inconvenient for our purpose. As a first example of the total hazard construction, \cite{yu07} reconsiders the looping default of \cite{jarrow01} in a two-dimensional setup. The references \cite{herbertsson08,bielecki11} show that the looping default model falls into the class of default models whose survival indicator process is a Markov chain, which provides an alternative stochastic construction being naturally consistent with stepwise simulation. Interestingly, in the bivariate case the probability law of $(\tau_1,\tau_2)$ is well-known in the statistical literature as well. In this respect, we point out the following 
\begin{remark}[The looping default model and the Freund distribution] The bivariate distribution which is derived in \cite{yu07} coincides precisely with the so-called bivariate Freund distribution, which is an ``old friend'' from reliability theory, see \cite{freund61}. In other words, the looping default has incidentally been known for many years in the statistical literature by the name ``Freund distribution''. The fact that both distributions coincide can be observed by comparing the bivariate densities derived in \cite{yu07} and \cite{freund61}, respectively. We will provide details below. 
\end{remark}
In the sequel, we provide a new construction for the Freund distribution based on Markov chains, which in our view provides a simpler access to this probability law, and in particular can be simulated stepwise in a very easy way. Moreover, it can easily be lifted to dimensions $d>2$ and to extensions with joint defaults.
\par
We consider two companies' default times $(\tau_1,\tau_2)$. Let $\lambda_1,\lambda_2,\tilde{\lambda}_1,\tilde{\lambda}_2>0$ be model parameters satisfying the constraint $\tilde{\lambda}_i \neq \lambda_1+\lambda_2$, $i=1,2$. We construct the associated survival indicator process $\textbf{Z}(t):=(\I{\{\tau_1>t\}},\I{\{\tau_2>t\}})$ as a time-homogeneous continuous-time Markov chain. This process is fully described by its so-called intensity matrix $Q$, which algebraically is a $4 \times 4$-matrix with vanishing row sums and all off-diagonal elements being non-negative. Indexing the four states $(1,1),(0,1),(1,0),(0,0)$ by the numbers $1,2,3,4$ we define the $Q$-matrix as
\begin{gather*}
Q = \begin{pmatrix}
-(\lambda_1+\lambda_2) & \lambda_1 & \lambda_2 & 0 \\
0 & -\tilde{\lambda}_2 & 0 & \tilde{\lambda}_2 \\
0 & 0 & -\tilde{\lambda}_1 & \tilde{\lambda}_1 \\
0 & 0 & 0 & 0 
\end{pmatrix}
\end{gather*}
This matrix has to be read as follows: Being in a certain state corresponds to being in a certain row of the matrix. For instance, the process starts in state $(1,1)$ corresponding to row $1$. Now for each other state $(0,1),(1,0),(0,0)$ there is a latent exponential random variable, which describes the time span before the chain moves there. The exponential rate of the corresponding random variables are given as the entries $Q_{(1,1),(0,1)},Q_{(1,1),(1,0)},Q_{(1,1),(0,0)}$, i.e.\ in columns $2,3,4$ of row $1$, respectively. For instance, the chain cannot go directly from zero default $(1,1)$ to joint default $(0,0)$, hence the respective rate equals $Q_{(1,1),(0,0)}=0$. However, the first company has hazard rate $\lambda_1$ and the second has hazard rate $\lambda_2$, determining the entries $Q_{(1,1),(0,1)}$ and $Q_{(1,1),(1,0)}$. The diagonal entry $Q_{(1,1),(1,1)}$ finally is the negative of the sum over all other entries in the row, stochastically being the negative of the exponential rate of the occupation time in state $(1,1)$. This is because the minimum of independent exponential random variables is again exponential, and the respective exponential rates add up. The same logic applies to the other rows of $Q$. In particular, after the default of one company, the hazard rate of the remaining company changes from $\lambda_i$ to $\tilde{\lambda}_i$, and the bottom row of $Q$ is zero because both companies being bankrupt is an absorbing state. Using diagonalization, one can show that
\begin{gather*}
{P}[t]:=e^{t\,Q}\in\IR^{4\times 4},\quad t \geq 0,
\end{gather*}    
is given by
\begin{align*}
P_{(1,1),(1,1)}[t] & = e^{-(\lambda_1+\lambda_2)\,t},\\
P_{(1,1),(0,1)}[t] & = \frac{\lambda_1}{\lambda_1+\lambda_2-\tilde{\lambda}_2}\,\Big(e^{-\tilde{\lambda}_2\,t}-e^{-(\lambda_1+\lambda_2)\,t}\Big),\\
P_{(1,1),(1,0)}[t] & = \frac{\lambda_2}{\lambda_1+\lambda_2-\tilde{\lambda}_1}\,\Big(e^{-\tilde{\lambda}_1\,t}-e^{-(\lambda_1+\lambda_2)\,t}\Big),\\
P_{(1,1),(0,0)}[t] & = -\frac{\lambda_1}{\lambda_1+\lambda_2-\tilde{\lambda}_2}\,e^{-\tilde{\lambda}_2\,t}-\frac{\lambda_2}{\lambda_1+\lambda_2-\tilde{\lambda}_1}\,e^{-\tilde{\lambda}_1\,t}\\
& \quad +1+\Big(\frac{\lambda_1}{\lambda_1+\lambda_2-\tilde{\lambda}_2}+\frac{\lambda_2}{\lambda_1+\lambda_2-\tilde{\lambda}_1}-1 \Big)\,e^{-(\lambda_1+\lambda_2)\,t},\\
P_{(0,1),(0,1)}[t] & = e^{-\tilde{\lambda}_2\,t},\quad P_{(0,1),(0,0)}(t)  = 1-e^{-\tilde{\lambda}_2\,t},\\
P_{(1,0),(1,0)}[t] & = e^{-\tilde{\lambda}_1\,t},\quad P_{(1,0),(0,0)}(t)  = 1-e^{-\tilde{\lambda}_1\,t},
\end{align*}
and all other entries of $P$ being zero. In particular, we calculate 
\begin{align*}
\IP(\tau_1>t_1,\tau_2>t_2) & = \begin{cases}
P_{(1,1),(1,1)}(t_1)\,\big( P_{(1,1),(1,1)}(t_2-t_1)+P_{(1,1),(0,1)}(t_2-t_1)\big) & , t_2 \geq t_1 \\
P_{(1,1),(1,1)}(t_2)\,\big( P_{(1,1),(1,1)}(t_1-t_2)+P_{(1,1),(1,0)}(t_1-t_2)\big) & , t_1 > t_2 
\end{cases}\\
& = \begin{cases}
\frac{\lambda_2-\tilde{\lambda}_2}{\lambda_1+\lambda_2-\tilde{\lambda}_2}\,e^{-(\lambda_1+\lambda_2)\,t_2}+\frac{\lambda_1}{\lambda_1+\lambda_2-\tilde{\lambda}_2} e^{-\tilde{\lambda}_2\,t_2-(\lambda_1+\lambda_2-\tilde{\lambda}_2)\,t_1} & , t_2 \geq t_1 \\
\frac{\lambda_1-\tilde{\lambda}_1}{\lambda_1+\lambda_2-\tilde{\lambda}_1}\,e^{-(\lambda_1+\lambda_2)\,t_1}+\frac{\lambda_2}{\lambda_1+\lambda_2-\tilde{\lambda}_1} e^{-\tilde{\lambda}_1\,t_1-(\lambda_1+\lambda_2-\tilde{\lambda}_1)\,t_2} & , t_1 > t_2 
\end{cases}.
\end{align*}   
The latter distribution is precisely the Freund distribution, which can be seen by comparing it to Equation (47.26) in \cite[p.\ 356]{kotz00}. We would like to note additionally that the so-called $ACBVE(\eta_1,\eta_2,\eta_{12})$-distribution of \cite{block74} arises as the three-parametric subfamily of the Freund distribution, obtained from the parameters 
\begin{gather*}
\lambda_1=\eta_1+\frac{\eta_{12}\eta_1}{\eta_1+\eta_2},\,\lambda_2=\eta_2+\frac{\eta_{12}\eta_2}{\eta_1+\eta_2},\,\tilde{\lambda}_1=\eta_1+\eta_{12},\,\tilde{\lambda}_2=\eta_2+\eta_{12}.
\end{gather*}
Multivariate extensions of the described Markov chain construction are now clearly straightforward. One simply has to define the intensity matrix $Q$ as follows: For each set $I \subset \{1,\ldots,d\}$ of non-bankrupt names one has to define exponential rates $\tilde{\lambda}_J$ for all subsets $J\subset I$ and $|J|=|I|-1$, and write them in the respective entry $Q_{h(I),h(J)}$. All other off-diagonal entries of $Q$ are set to zero, and then the diagonal elements are computed as the negative of the sum over all previously defined row entries. Similarly, one can generalize the model to allow for multiple defaults. This means that also subsets $J\subset I$ with $|J|=|I|-k$, $k\geq 1$, are assigned exponential rates.\par
For stepwise simulation along the $\Delta$-grid, required is nothing but the matrix $P[\Delta]=\exp(\Delta\,Q)$, which numerically is a standard routine (e.g.\ \texttt{expm} in MATLAB).
\section{A new characterization of the Marshall--Olkin law}\label{sec:newmarshallolkin}
Throughout this section, we denote by $\textbf{Z}_I$ the $|I|$-margin of the survival indicator process $\textbf{Z}$ which only consists of the components indexed by $I \subset \{1,\ldots,d\}$. 
Assuming the survival indicator process to be a Markovian process, there are still serious drawbacks to acknowledge with respect to practical viability:
\begin{itemize}
\item[(a)] If we want to carry out a simulation study involving only a subportfolio, i.e.\ a subgroup $I \subsetneq \{1,\ldots,d\}$ of components, we still have to simulate the whole portfolio $\textbf{Z}$ and cannot simulate the subvector $\textbf{Z}_I$ directly with a more efficient simulation engine. Hence, the model is not stable under taking margins, a property that is crucial for large portfolios that are frequently restructured.
\item[(b)] If our application requires us to add (remove) components to (from) our portfolio on a frequent basis, every such change might alter the dependence structure between the original components, and therefore requires careful readjustments of the model. In other terms, the model cannot be incremented straightforwardly in size in a nested fashion. Models with the property of being variable in the dimension are very manageable and popular. A typical case is the Gaussian one-factor copula model. 
\item[(c)] As a consequence of (b), in particular, the univariate marginal laws might be affected heavily by the dependence structure between all components when updating or re-balancing our portfolio. This means that it is unnatural to split the dependence structure from the margins. An application in which pre-assigned univariate models are coupled by an arbitrary dependence model a posteriori, a popular industry practice related to the use of copula functions, is not recommended in this respect.
\end{itemize}
In order to maintain all properties required for the aforementioned applications, one therefore has to postulate that all subprocesses $\textbf{Z}_I$ are Markovian, and not only $\textbf{Z}=\textbf{Z}_{\{1,\ldots,d\}}$. One observation is already helpful in this regard: If $\textbf{Z}$ is time-homogeneous Markovian, it is a continuous-time Markov chain on the finite state space $\{0,1\}^d$. Since the distributional properties of these processes are well-known to be characterized in terms of a finite-dimensional intensity matrix, the distribution of $(\tau_1,\ldots,\tau_d)$ must also be from a finite-parametric family. The following distribution clearly is a candidate, named after the seminal reference \cite{marshall67}.
\begin{definition}[The Marshall--Olkin distribution]
On a probability space $(\Omega,\F,\IP)$, a random vector $(\tau_1,\ldots,\tau_d)$ taking values in $[0,\infty)^d$ is said to follow a Marshall--Olkin distribution if it has the survival function
\begin{gather*}
\IP(\tau_1>t_1,\ldots,\tau_d>t_d)=\exp\Big(-{\sum_{\emptyset \neq I \subset \{1,\ldots,d\}}\lambda_I \,\max_{i \in I}\{t_i\}}\Big),\quad t_1,\ldots,t_d \geq 0,
\end{gather*}
for non-negative parameters $\{\lambda_I\}$, $\emptyset \neq I \subset \{1,\ldots,d\}$, satisfying $\sum_{I:i \in I}\lambda_I>0$ for all $i=1,\ldots,d$. A canonical construction for this distribution is based on a collection of independent exponential random variables $\{E_I\}$ with $E_I \sim Exp(\lambda_I)$ and given via
\begin{gather}\label{MO_construct}
\tau_k:=\min\{E_I\,:\,k \in I\},\quad k=1,\ldots,d.
\end{gather}
\end{definition}

We now focus on the multi-variate case and formally prove an intuitive statement which intimately links the lack-of-memory property of a random vector to the Markovianity of its associated survival indicator process. Connections between Markov chains and random vectors have already been studied in the statistical literature, both implicitly and explicitly. For instance, there is a branch of literature in reliability theory concerned with multi-variate exponential distributions, motivated by multi-variate versions of the lack-of-memory property. Among these the Marshall--Olkin distribution is the most popular one, because from many viewpoints - and we present another one in Theorem \ref{main_thm} - it satisfies an intuitive and useful lack-of-memory property. Most dominantly, it is stable under marginalization, i.e.\ lower-dimensional margins satisfy the same lack-of-memory property as well and, in particular, the univariate margins are exponential. When giving up this stability property, but still postulating a multi-variate lack-of-memory property,  one can still have random vectors whose associated survival indicators are Markovian (even though to the best of our knowledge, this has never been observed explicitly in the literature before)\footnote{Even more general is the family of multi-variate phase type distributions introduced in \cite{assaf84}, see also \cite{cai05}, which define a random vector explicitly via a Markov chain. The default times are defined as the first time points at which an underlying Markov chain reaches an absorbing state, and thus also serve as a very intuitive framework for credit risk modeling - thinking about the link with credit rating transition matrices. Unfortunately, multi-variate phase type distributions, due to their generality, appear to be very difficult to work with.}. Summing up, we have the following proper inclusion: 
\begin{align*}
\{(\tau_1,\ldots,\tau_d )\sim \text{ Marshall--Olkin law}\}&\subset \{\textbf{Z}(t) \text{ is time-homogeneous Markovian}\} \\
\text{which we describe as} \\
 \text{``stepwise simulation'' }\oplus\text{ ``marginalization''} &\subset \text{``stepwise simulation''}
\end{align*}
With regards to the first inclusion, Theorem \ref{main_thm} shows that the Marshall--Olkin distribution arises as the proper subset of random vectors whose Markov property is preserved under marginalization.

\newpage

\begin{theorem}[Markovianity of survival indicators and lack-of-memory]\label{main_thm}
The $d$-dimensional survival indicator processes $\textbf{Z}_I$ are time-homogeneous Markovian for all subsets $\emptyset\neq I \subset \{1,\ldots,d\}$ $\Leftrightarrow$
 $(\tau_1,\ldots,\tau_d)$ has a Marshall--Olkin distribution
\end{theorem}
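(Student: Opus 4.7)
The plan is to prove both directions by identifying the transition rates $\lambda_C^A$ of the CTMC $\textbf{Z}$ (from state $h(A)$ to $h(A\setminus C)$) with the Marshall--Olkin shock-rate structure $\lambda_C^A = \sum_{I:\,I\cap A=C}\lambda_I$. For the direction ``Marshall--Olkin $\Rightarrow$ all $\textbf{Z}_I$ Markov'', I would use the canonical construction \eqref{MO_construct}: given $\textbf{Z}(t)=h(A)$, every shock $E_I$ with $I\cap A\neq\emptyset$ must still be alive, and by the one-dimensional lack-of-memory property each residual $E_I - t$ is again $\mathrm{Exp}(\lambda_I)$ independently, whereas shocks with $I\cap A = \emptyset$ are irrelevant for the future of $\textbf{Z}$. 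Hence the conditional distribution of the future of $\textbf{Z}$ depends only on $A$, which yields the time-homogeneous Markov property with the announced rates. To extend to sub-indicators $\textbf{Z}_I$, I would regroup the shocks as $\tilde{E}_J := \min\{E_K : K\cap I = J\}$ for $\emptyset\neq J\subseteq I$, giving independent $\tilde{E}_J\sim \mathrm{Exp}\bigl(\sum_{K:K\cap I = J}\lambda_K\bigr)$ and $\tau_i = \min\{\tilde{E}_J : i\in J\}$ for $i\in I$, so $(\tau_i)_{i\in I}$ is again Marshall--Olkin and the same argument delivers Markovianity of $\textbf{Z}_I$.

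For the reverse direction, assume that $\textbf{Z}_I$ is time-homogeneous Markov for every non-empty $I\subseteq\{1,\ldots,d\}$. Then $\textbf{Z}$ is itself a CTMC with some rate family $\{\lambda_C^A\}$, and for each $A$ the projected process $\textbf{Z}_A$ is Markov, which by the classical \emph{strong lumpability} criterion forces
\begin{gather*}
\sum_{C\subseteq A^*:\, C\cap A = K}\lambda_C^{A^*}\quad\text{to be independent of }A^*\supseteq A,
\end{gather*}
for every non-empty $K\subseteq A$. Setting $\lambda_I := \lambda_I^{\{1,\ldots,d\}}$ and specializing to the two extremes $A^*=A$ and $A^*=\{1,\ldots,d\}$ yields the key identity $\lambda_K^A = \sum_{I:\,I\cap A = K}\lambda_I$ for all $A$ and every non-empty $K\subseteq A$. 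These are exactly the generator rates of a Marshall--Olkin CTMC, so $\textbf{Z}$ and such an MO CTMC agree in law starting from the common initial state $h(\{1,\ldots,d\})$, and consequently $(\tau_1,\ldots,\tau_d)$ has the Marshall--Olkin distribution. Non-negativity of $\lambda_I$ is automatic as a transition rate of $\textbf{Z}$; the positivity requirement $\sum_{I\ni i}\lambda_I>0$ for each $i$ comes from the fact that $\textbf{Z}_{\{i\}}$ is a nontrivial time-homogeneous Markov chain whose only transition rate equals $\lambda_{\{i\}}^{\{i\}}=\sum_{I\ni i}\lambda_I$.

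The main obstacle will be the rigorous derivation of the pointwise lumpability identity above from the mere Markovianity of $\textbf{Z}_A$: a priori one only controls the transition rates of $\textbf{Z}_A$ averaged over the states of $\textbf{Z}$ consistent with a given projection, whereas the row sums of the generator of $\textbf{Z}$ restricted to a fiber need to agree state by state. The standard way to patch this is to invoke the strong Markov property of $\textbf{Z}$ at the first hitting time of each specific state $A^*$: this forces the instantaneous rate of $\textbf{Z}_A$ computed from $\textbf{Z}=h(A^*)$ to coincide with the intrinsic rate dictated by the lumped Markov chain, and matching across all fibers of the projection gives the desired pointwise equality of the row sums. Once this point is settled, the remainder of the argument reduces to bookkeeping on the generator.
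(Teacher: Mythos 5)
Your ``$\Leftarrow$'' direction (Marshall--Olkin implies Markovianity of every $\textbf{Z}_I$) is essentially sound: you replace the paper's Arnold/compound-Poisson representation by the canonical shock construction \eqref{MO_construct} together with stability of the Marshall--Olkin law under margins (regrouping $\tilde{E}_J=\min\{E_K: K\cap I=J\}$), and, apart from the usual care needed when conditioning on the null event that prescribes an exact past trajectory of $\textbf{Z}_I$ (which the paper's representation \eqref{arnoldrepresent} is designed to avoid), the memorylessness-of-residual-shocks argument goes through.

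The ``$\Rightarrow$'' direction, however, has a genuine gap at its central step. The classical Kemeny--Snell strong-lumpability criterion characterizes when a lumped chain is Markov for \emph{every} initial distribution; your hypothesis only gives that $\textbf{Z}_A$ is Markov for the single initial law concentrated at the all-alive state, i.e.\ weak lumpability, and weak lumpability does not yield the pointwise identity on lumped row sums that your argument needs. Concretely, the Markov property of $\textbf{Z}_A$ only constrains jump rates conditioned on $\F_t^{\textbf{Z}_A}$, i.e.\ averages of the aggregated rates $\sum_{C:\,C\cap A=K}\lambda_C^{A^*}$ over the conditional distribution of the unobserved coordinates on the fiber $\{A^*\supseteq A\}$; for $A=\{1\}$ and $d=3$, say, the only conditioning information is $\{\tau_1>t\}$, which produces a one-parameter family of mixtures indexed by $t$ and is far from pinning down the rate at each individual $A^*$ (in particular at $A^*=A$). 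Your proposed patch does not repair this: the first hitting time of the specific state $h(A^*)$ is a stopping time for $\F^{\textbf{Z}}$ but not for $\F^{\textbf{Z}_A}$, so the Markov property of $\textbf{Z}_A$ cannot be invoked at it; and invoking the strong Markov property of $\textbf{Z}$ merely restarts the full chain from $h(A^*)$, a different initial condition about whose $A$-projection the hypothesis (which concerns the process started at $(1,\ldots,1)$) asserts nothing. A direct computation does force $\tilde{\lambda}_1=\lambda_{\{1\}}+\lambda_{\{1,2\}}$ in $d=2$, but no soft lumpability argument delivers the pointwise rate identity in general, so your reduction ``generator of $\textbf{Z}$ $=$ Marshall--Olkin generator'' is unproven as it stands. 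The paper avoids generator identification altogether: it factorizes the finite-dimensional distributions of each $\textbf{Z}_I$ via the transition function to obtain the multivariate lack-of-memory functional equation for every sub-vector, and then quotes the Marshall--Olkin characterization of that property; if you wish to keep the generator route, you would have to prove the pointwise lumped-rate identity, e.g.\ by induction on the number of dead components using the Markovianity of all sub-vectors simultaneously, rather than citing lumpability.
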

\begin{proof}
``$\Rightarrow$'' By the time-homogeneous Markov property, {there is a transition function $p_{\textbf{x},\textbf{y}}(t)$ for $\textbf{x},\textbf{y} \in \{0,1\}^d$ and $t \geq 0$ such that
\begin{gather*}
\IP(\textbf{Z}(t_n)=\textbf{x}_n,\ldots,\textbf{Z}(t_1)=\textbf{x}_1) = p_{(1,\ldots,1),\textbf{x}_1}(t_1)\,\prod_{l=2}^{n}p_{\textbf{x}_{l-1},\textbf{x}_l}(t_l-t_{l-1})
\end{gather*}
for $t_n>\ldots>t_1>0$ and $\textbf{x}_1,\ldots,\textbf{x}_n \in \{0,1\}^d$. Let $t,s_1,\ldots,s_d \geq 0$ be arbitrary and denote by $\pi$ a permutation such that $s_{\pi(1)}\leq s_{\pi(2)} \leq \ldots \leq s_{\pi(d)}$ is the ordered list of $s_1,\ldots,s_d$. Define the following subsets of $\{0,1\}^d$:
\begin{gather*}
A_1:=\{(1,\ldots,1)\},\quad A_k:=\big\{ \textbf{x} \in \{0,1\}^d\,:\,x_{\pi(l)}=1 \mbox{ for all }l \geq k\big\},\quad k=2,\ldots,d.
\end{gather*}
In words, $A_k$ denotes the subset of $\{0,1\}^d$ in which all components $\pi(k),\ldots,\pi(d)$ are still alive. There is a finite number $N$ of distinct paths $(\textbf{x}^{(i)}_2,\ldots,\textbf{x}^{(i)}_d) \in A_2 \times \ldots \times A_d$, $i=1,\ldots,N$, that avoid inconsistent patterns in time (such as default resurrections etc), i.e. such that
\begin{align*}
0 < \IP(\textbf{Z}(t+s_{\pi(1)})=(1,\ldots,1),\textbf{Z}(t+s_{\pi(2)})=\textbf{x}^{(i)}_2,\ldots,\textbf{Z}(t+s_{\pi(d)})=\textbf{x}^{(i)}_d).
\end{align*}
This set of paths depends on $s_1,\ldots,s_d$, but it does not depend on $t$ by the time-homogeneity property of $\textbf{Z}$. 
We have
\begin{align*}
& \IP(\tau_1>t,\ldots,\tau_d>t)\,\IP(\tau_1>s_1,\ldots,\tau_d>s_d)\\
& \quad =\IP(\textbf{Z}(t) \in A_1)\,\IP\big( \textbf{Z}(s_{\pi(1)}) \in A_1,\,\textbf{Z}(s_{\pi(2)}) \in A_2,\ldots,\textbf{Z}(s_{\pi(d)}) \in A_d\big) \\
& \quad =\IP(\textbf{Z}(t) \in A_1)\,\sum_{i=1}^{N}\IP(\textbf{Z}(s_{\pi(1)})=(1,\ldots,1),\textbf{Z}(s_{\pi(2)})=\textbf{x}^{(i)}_2,\ldots,\textbf{Z}(s_{\pi(d)})=\textbf{x}^{(i)}_d)\\
& \quad =p_{(1,\ldots,1),(1,\ldots,1)}(t)\,\sum_{i=1}^{N}p_{(1,\ldots,1),(1,\ldots,1)}(s_{\pi(1)})\,p_{(1,\ldots,1),\textbf{x}^{(i)}_2}(s_{\pi(2)}-s_{\pi(1)})\,\prod_{k=3}^{d}p_{\textbf{x}^{(i)}_{k-1},\textbf{x}^{(i)}_k}(s_{\pi(k)}-s_{\pi(k-1)})\\
& \quad =\sum_{i=1}^{N} p_{(1,\ldots,1),(1,\ldots,1)}(t+s_{\pi(1)})\,p_{(1,\ldots,1),\textbf{x}^{(i)}_2}(t+s_{\pi(2)}-(t+s_{\pi(1)}))\,\prod_{k=3}^{d}p_{\textbf{x}^{(i)}_{k-1},\textbf{x}^{(i)}_k}(t+s_{\pi(k)}-(t+s_{\pi(k-1)}))\\
& \quad = \IP(\textbf{Z}(t+s_{\pi(1)})\in A_1,\textbf{Z}(t+s_{\pi(2)}) \in A_2,\ldots,\textbf{Z}(t+s_{\pi(d)}) \in A_d)\\
& \quad = \IP(\tau_1>t+s_1,\ldots,\tau_d>t+s_d)
\end{align*}
}
Repeating the above derivation for every subset $I \subset \{1,\ldots,d\}$ we obtain the equation
\begin{gather*}
\IP(\tau_{i_1}>t+s_{i_1},\ldots,\tau_{i_k}>t+s_{i_k})=\IP(\tau_{i_1}>t,\ldots,\tau_{i_k}>t)\,\IP(\tau_{i_1}>s_{i_1},\ldots,\tau_{i_k}>s_{i_k})
\end{gather*}
for arbitrary $1 \leq i_1,\ldots,i_k \leq d$ and $t,s_{i_1},\ldots,s_{i_k} \geq 0$. This is precisely the functional equality describing the multi-variate lack-of-memory property, which is well-known to characterize the Marshall--Olkin exponential distribution, see \cite{marshall67,marshall95}.
\par

``$\Leftarrow$'' Assume $(\tau_1,\ldots,\tau_d)$ has a Marshall--Olkin distribution with parameters $\{\lambda_I\}$, $\emptyset \neq I \subset \{1,\ldots,d\}$ satisfying $\sum_{I:k \in I}\lambda_I>0$ for all $k=1,\ldots,d$. We prove Markovianity of $\textbf{Z}_I$ for an arbitrary non-empty subset $I$ of components. Without loss of generality, we may assume that $(\tau_1,\ldots,\tau_d)$ is defined on the following probability space, as first considered in \cite{arnold75}: we consider an iid sequence $\{{E}_n\}_{n \in \IN}$ of exponential random variables with rate $\lambda := \sum_{\emptyset\neq K \subset \{1,\ldots,d\}}\lambda_K$ and an independent iid sequence $\{Y_n\}_{n \in \IN}$ of set-valued random variables with distribution given by
\begin{gather*}
\IP(Y_1=K)=p_K:=\frac{\lambda_K}{\lambda}, \quad \emptyset \neq K \subset \{1,\ldots,d\}.
\end{gather*}
The random vector $(\tau_1,\ldots,\tau_d)$ is then defined as $\tau_k:={E}_1+\ldots+{E}_{\min\{n\,:\,k \in Y_n\}}$, $k=1,\ldots,d$. 
Let us introduce the notation
\begin{gather*}
N_t:=\sum_{k=1}^{\infty}1_{\{{E}_1+\ldots+{E}_k \leq t\}},\quad t \geq 0,
\end{gather*}
which is a Poisson process with intensity $\lambda$.
Fix a non-empty set $I \subset \{1,\ldots,d\}$, say $I=\{i_1,\ldots,i_k\}$ with $1 \leq i_1<\ldots<i_k\leq d$. Denoting the power set of $\{1,\ldots,d\}$ by $\mathcal{P}_d$, we define the function $f_I:\{0,1\}^k \times \mathcal{P}_d \rightarrow \{0,1\}^k$ as follows:
\begin{gather*}
j\mbox{-th component of }f_I(\vec{x},J):=1_{\{x_j=1 \mbox{ and }i_j \notin J\}},\quad j=1,\ldots,k,
\end{gather*}
for $\vec{x}=(x_1,\ldots,x_k) \in \{0,1\}^k$ and $J \in \mathcal{P}_d$. It is now readily observed -- in fact just a rewriting of Arnold's model -- that 
\begin{gather}
\textbf{Z}_I(t) = f_I\Big(\textbf{Z}_I(s),\bigcup_{k=N_s+1}^{N_t}Y_k \Big),\quad t \geq s \geq 0.
\label{arnoldrepresent}
\end{gather}
This stochastic representation implies the claim, since the second argument of $f_I$ is independent of $\F_I(s):=\sigma(\textbf{Z}_I(u)\,:\,u \leq s)$ by the Poisson property of $\{N_t\}$. To see this, it suffices to observe that $Z_I(s)$ is a function of $N_s$ and $Y_{1},\ldots,Y_{N_s}$ (which can be seen by setting $t=s$ and $s = 0$ in (\ref{arnoldrepresent})), whereas the second argument is a function of $Y_{N_s+1},\ldots,Y_{N_t}$. Consequently, the independent random variables $N_s$ and $N_t-N_s$ only serve as a random pick of two independent (because disjoint) partial sequences of the iid sequence $Y_1,Y_2,\ldots$. 
\end{proof}

\newpage

\subsection{Parameterization and efficient implementation}
The efficient implementation of an unbiased simulation scheme for the Marshall--Olkin law is subject of this paragraph. We consider the tasks:
\begin{itemize}
\item[(a)] Finding a convenient parameterization of the Marshall--Olkin law, especially in large dimensions.
\item[(b)] Constructing an efficient and unbiased simulation engine for the Marshall--Olkin law along a given time grid $0<t_1<t_2<\ldots<t_n=T$.
\end{itemize}
Simulation requires a (preferably simple) stochastic model. There exist two classical stochastic representations for the Marshall--Olkin distribution. The first, see (\ref{MO_construct}), requires $2^d-1$ exponentially distributed shocks, see \cite{marshall67}. A second, see \cite{arnold75}, is based on compound sums of exponentials. In both models, the tasks (a) and (b) are intimately linked, because the number of parameters different from zero enters the (expected) runtime of the respective simulation algorithms, see \cite[Chapter 3.1]{maibook12}. 

The references \cite{giesecke03,lindskog03,burtschell09} tackle this issue by setting most parameters to zero, however, this results in simplistic subfamilies. Concluding, these canonical stochastic models are not recommended in dimensions greater than, say, $d=10$, although occasionally the dynamical properties of the aggregated default counting process and of the related loss distribution have been studied under pool homogeneity assumptions, see for example \cite{brigo07}, in dimensions such as 125. We should also mention that, in this context,  \cite{crepeycdo} manages to attain high dimensions and realistic calibration to market data while modeling defaults in a bottom-up fashion and with no need to assume pool homogeneity. 
\par 
There exists, however, a third stochastic representation of the Marshall--Olkin distributions due to \cite{maiJMVA09,maiJSCS10}, based on the notions of L\'evy subordinators. It is thus called ``L\'evy-frailty construction''. This approach has been generalized and applied to portfolio-credit risk by, e.g., \cite{bernhart11,linetsky12}. Although the (one-factor) L\'evy-frailty construction does not include all possible Marshall--Olkin laws, it is still general enough to comprise a subfamily that is big enough for applications. Furthermore, with regards to the tasks (a) and (b), it has three crucial advantages:
\begin{itemize}
\item[(1)] The number of parameters does not depend on the dimension, but instead can be chosen quite arbitrarily.
\item[(2)] The stochastic model can be interpreted as a factor model, such that a simulation along a given time grid is natural and straightforward. The numerical effort increases only linearly in the dimension $d$ and the number of time steps of the grid.
\item[(3)] The model serves as a convenient building block for hierarchical (and other) factor models. This will be subject of Proposition \ref{survival_factor}.
\end{itemize}
The L\'evy-frailty construction for the random times $(\tau_1,\ldots,\tau_d)$ works as follows. Define
\begin{gather*}
\tau_k:=\inf\big\{t\geq 0: \Lambda_t\geq E_k\big\},\quad k=1,\ldots,d,
\end{gather*}
where the sequence $\{E_k\}_{k\in\IN}$ consists of iid unit exponentials and the independent L\'evy subordinator $\Lambda=\{\Lambda_t\}_{t\geq 0}$ is characterized by its Laplace exponent $\Psi:\IR_+\rightarrow \IR_+$ via $\IE[\exp(-x\,\Lambda_t)]=\exp(-t\,\Psi(x))$, for all $x,t\geq 0$. The L\'evy subordinator acts as a joint factor on the independent list of exponential trigger variables $E_1,\ldots,E_d$. The resulting $\tau_k$'s are defined as the first passage times of the L\'evy subordinator across the individual trigger variables. Jumps in the L\'evy subordinator imply the possibility of multiple components being killed at the same time. The lack-of-memory property of the Marshall--Olkin distribution is the result of the lack-of-memory property of the exponential trigger variates and the independent and stationary increments of the L\'evy subordinator. This property will be exploited for simulations later on. It can be shown, see \cite{maiJMVA09}, that the choice of L\'evy subordinator affects the homogeneous $Exp(\lambda)$-marginal laws via $\lambda=\Psi(1)$ and implies a Marshall--Olkin survival function of the form
\begin{gather*}
\IP(\tau_1>t_1,\ldots,\tau_d>t_d)=\prod_{k=1}^d e^{-(t_{\pi(k)}-t_{\pi(k-1)})\,\Psi(d+1-k)},
\end{gather*}
where $t_{\pi(1)} \leq t_{\pi(2)} \leq \ldots \leq t_{\pi(d)}$ is the ordered list of $t_1,\ldots,t_d \in [0,\infty)$. The structural properties of the resulting dependence structure relate to properties of the L\'evy subordinator $\Lambda$. For the construction of parametric models it is an important observation that each parametric family of L\'evy subordinators, given in terms of the Laplace exponent $\Psi_{\theta}$, implies a parametric subfamily of the Marshall--Olkin law with the same parameter vector $\theta$. This provides a convenient, yet flexible, methodology to set up models with a reasonable number of parameters. Moreover, sampling is intuitive and fast, as demonstrated in Algorithm \ref{algo1}.
\begin{algorithm}[Sampling the L\'evy-frailty model on a given time grid]\label{algo1}
Given the time grid $0=t_0<t_1<\ldots<t_n=T$ and the L\'evy subordinator $\Lambda$. Initialize the current time as $t^{\ast}:=0$, $\ell:=0$, and the number of components that are still alive by $nalive=d$.
\begin{itemize}
\item[(1)] Repeat the following steps until $\big((nalive==0)$ or $(t^{\ast}==T)\big)$, i.e.\ until all components are destroyed or the final time horizon is reached, whichever takes place first:
\begin{itemize}
\item[(a)] Set $t^{\ast}:=t_{\ell+1}$.
\item[(b)] Simulate the next increment $\Delta \Lambda_{\ell}:=\Lambda_{t_{\ell+1}}-\Lambda_{t_{\ell}}\sim \Lambda_{t_{\ell+1}-t_{\ell}}$ of the L\'evy subordinator on the time interval $[t_{\ell},t_{\ell+1}]$. Note that this is independent of the past, by the L\'evy properties of $\Lambda$.
\item[(c)] Simulate a list of independent unit exponentials $E_{i_1},\ldots,E_{i_{nalive}}$ for the components $\tau_{i_1},\ldots,\tau_{i_{nalive}}$ that have not been killed, yet. This is justified by the lack-of-memory property of the unit exponential law, i.e.\ the positive distance of a trigger variate (that has not been killed, yet) to the current state of the subordinator has a unit exponential law.
\item[(d)] For each $E_{i_k}$, $k=1,\ldots, nalive$, test if $(\Delta \Lambda_{\ell}>E_{i_k})$. Each time this condition is met, set $\tau_{i_k}:=t^{\ast}$ and decrease $nalive$ by one.\footnote{Instead of drawing exponential random variables along the lines of the L\'evy-frailty model, one might instead use Bernoulli$(1-\exp(-\Delta \Lambda_{\ell}))$ distributed ones in Step (1)(c,d) of Algorithm \ref{algo1}. This is justified by the observation that the conditional default probability of $\tau_k$ within $[t_{\ell},t_{\ell+1}]$ given $\Delta \Lambda_{\ell}$ is precisely $1-\exp(-\Delta \Lambda_{\ell})$. If the respective Bernoulli experiment was succesful, component $t_k$ is killed and set to $\tau_k:=t^{\ast}$. }
\item[(e)] $\ell:=\ell+1$.
\end{itemize}
\item[(2)] Return the vector $(\tau_1,\ldots,\tau_d)\in\{t_0,\ldots,t_n\}^d$ or, equivalently, the path of the indicator process $(\I{\{\tau_1>t\}},\ldots,\I{\{\tau_d>t\}})$ sampled on the given time grid.
\end{itemize}
\end{algorithm}
Next, we discuss some parametric families. L\'evy subordinators can be specified via Bernstein functions $\Psi$, acting as Laplace exponent, or via the law of $\Lambda_1$; the latter must be non-negative and infinitely divisible. For both, several parametric families can be found in, e.g., \cite{bertoin99,schilling10}. For the applications we are aiming at, we need families with Laplace exponent $\Psi$ given in closed form (to evaluate the joint survival function) and an efficient simulation scheme for the increments $\Delta \Lambda_{\ell}$ (for Algorithm \ref{algo1}). Below, we briefly provide and discuss some examples.
\begin{itemize}
\item[(1)] The simplest subordinator is a linear drift $\Lambda_t=\mu\, t$, $\mu>0$. This implies independent $Exp(\mu)$-distributed $\tau_1,\ldots,\tau_d$. We can add a jump to infinity at a random time $E\sim Exp(\lambda)$, providing $\Lambda_t=\mu\,t+\infty\,\I{\{t>E\}}$. The interpretation is an Armageddon scenario at time $E$ that kills all remaining components, a model implicitly used in \cite{burtschell09}. The Laplace exponent is $\Psi(x)=\mu\,x+\lambda\,\I{\{x>0\}}$.
\item[(2)] Another example is a compound Poisson process with drift $\mu\geq 0$. In this case, $\Lambda_t=\mu\,t+\sum_{i=1}^{N_t}Y_i$, where $\{N_t\}_{t\geq 0}$ is a Poisson process with intensity $\lambda>0$ and the $Y_i$ are iid random variables on $[0,\infty)$. The number of jumps within time $\Delta$ follows a $Poi(\lambda\, \Delta)$ distribution, such that the simulation is straightforward whenever the jump-size distribution of the $Y_i$ can be simulated. The Laplace exponent is $\Psi(x)=\mu\,x+\lambda\,\IE[1-\exp(-x\,J_1)]$.
\item[(3)] The Gamma subordinator, parameterized by $\beta>0,\eta>0$, is another example. Its Laplace exponent is given by $\Psi(x)=\beta\,\log(1+x/\eta)$, its distribution at time $t$ is a Gamma distribution and can thus easily be simulated, see, e.g., \cite[Algorithms 6.5 and 6.6, p.\ 242--243]{maibook12}.
\item[(4)] The Inverse Gaussian subordinator is parameterized by $\beta>0,\eta>0$, with $\Psi(x)=\beta\,(\sqrt{2\,x+\eta^2}-\eta)$. Its distribution at time $t$ is the same as the one of the first hitting-time of the level $\beta\,t$ of a Brownian motion with drift, hence the name. A simulation strategy is provided in \cite[Algorithm 6.10, p.\ 245]{maibook12}.
\item[(5)] The stable subordinator with parameter $\alpha\in(0,1]$ has a Laplace exponent given by $\Psi(x)=x^\alpha$. Its increments can be sampled as shown, e.g., in \cite[Algorithm 6.11, p.\ 246]{maibook12}.
\end{itemize}
Independent L\'evy subordinators form a cone with Laplace exponent being the corresponding linear combination of the Laplace exponents of the building blocks. Similarly, subordinators are stable under independent subordination; providing a second way to construct new subordinators from given ones. In particular, this allows to increase the number of parameters. In both cases, the simulation is possible whenever the building blocks can be simulated.
\par
The (one-factor) L\'evy-frailty construction provides subfamilies of the Marshall--Olkin law that are conditionally iid. This can be generalized without giving up numerical tractability. The simplest generalization is to alter the unit exponential trigger variables $E_k$ to exponentials with individual rate $\lambda_k$. The implementation of this conditional independence approach similar to Algorithm \ref{algo1} is immediate. Generalizing the dependence structure to non-homogeneous structures is possible via a factor-model ansatz, see \cite{linetsky12}. Starting from $m$ independent L\'evy subordinators $\hat{\Lambda}^{(1)},\ldots,\hat{\Lambda}^{(m)}$ with Laplace exponents $\hat{\Psi}_1,\ldots,\hat{\Psi}_m$ and considering the weight vectors $\bm{\theta}_k\in\IR^m_+$, $k=1,\ldots,d$, we can define the $d$-dimensional subordinator $\bm{\Lambda}=(\Lambda^{(1)},\ldots,\Lambda^{(d)})$, where $\Lambda^{(k)}=\bm{\theta}_k'\,\hat{\bm{\Lambda}}$. To simulate from this model, it is sufficent to simulate in Step (1)(b) of Algorithm \ref{algo1} each increment of the $m$ individual subordinators and to compute the resulting required subordinator increments thereof. This requires about $m$ times the effort of Algorithm \ref{algo1}, which still increases only linearly in the dimension $d$. %The survival function of the resulting model is computed in Proposition \ref{survival_factor} below.
\begin{proposition}[Joint survival function in the multifactor model]\label{survival_factor}
Let $t_1\geq 0,\ldots,t_d\geq 0$ and denote by $t_{\pi(1)}\leq \ldots\leq t_{\pi(d)}$ the ordered list, $\pi:\{1,\ldots,d\}\rightarrow \{1,\ldots,d\}$ is the permutation map that corresponds to this ordering. The joint survival function of $(\tau_1,\ldots,\tau_d)$ is then given by
\begin{gather*}
\IP(\tau_1>t_1,\ldots,\tau_d>t_d)=\exp\Big(-\sum_{\ell=1}^m\sum_{j=1}^d\hat{\Psi}_{\ell}\big(\sum_{k=j}^d \theta_{\ell,\pi(k)}\big)(t_{\pi(j)}-t_{\pi(j-1)})\Big).
\end{gather*}
\end{proposition}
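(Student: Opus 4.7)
The plan is to compute the joint survival function by a two-step conditioning argument: first condition on the paths of the $d$-dimensional subordinator $\bm{\Lambda}$ to eliminate the randomness coming from the iid unit-exponential trigger variables $E_1,\ldots,E_d$, and then evaluate the resulting multivariate Laplace transform of $\bm{\Lambda}$ using independence of the building blocks $\hat{\Lambda}^{(1)},\ldots,\hat{\Lambda}^{(m)}$ and their independent increments. By relabeling the indices we may assume without loss of generality that $\pi$ is the identity, i.e.\ $0=t_0\leq t_1\leq\ldots\leq t_d$; the general case is obtained by simply renaming the coordinates at the end.

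The first step exploits the construction $\tau_k=\inf\{t\geq 0:\Lambda^{(k)}_t\geq E_k\}$ with the $E_k$ iid $Exp(1)$ and independent of $\bm{\Lambda}$. Conditioning on $\sigma(\bm{\Lambda})$ yields $\IP(\tau_k>t_k\mid \bm{\Lambda})=\exp(-\Lambda^{(k)}_{t_k})$, and the conditional independence of the $\tau_k$'s given $\bm{\Lambda}$ together with $\Lambda^{(k)}=\bm{\theta}_k'\hat{\bm{\Lambda}}$ gives
\begin{gather*}
\IP(\tau_1>t_1,\ldots,\tau_d>t_d)=\IE\Big[\exp\Big(-\sum_{k=1}^d\sum_{\ell=1}^m\theta_{\ell,k}\,\hat{\Lambda}^{(\ell)}_{t_k}\Big)\Big]=\prod_{\ell=1}^m\IE\Big[\exp\Big(-\sum_{k=1}^d\theta_{\ell,k}\,\hat{\Lambda}^{(\ell)}_{t_k}\Big)\Big],
\end{gather*}
where the last equality uses independence of $\hat{\Lambda}^{(1)},\ldots,\hat{\Lambda}^{(m)}$.

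For each inner expectation I would write $\hat{\Lambda}^{(\ell)}_{t_k}=\sum_{j=1}^k\Delta^{(\ell)}_j$ with $\Delta^{(\ell)}_j:=\hat{\Lambda}^{(\ell)}_{t_j}-\hat{\Lambda}^{(\ell)}_{t_{j-1}}$ and interchange the order of summation:
\begin{gather*}
\sum_{k=1}^d\theta_{\ell,k}\,\hat{\Lambda}^{(\ell)}_{t_k}=\sum_{j=1}^d\Big(\sum_{k=j}^d\theta_{\ell,k}\Big)\,\Delta^{(\ell)}_j.
\end{gather*}
Since the increments $\Delta^{(\ell)}_j$ are independent with $\Delta^{(\ell)}_j\stackrel{d}{=}\hat{\Lambda}^{(\ell)}_{t_j-t_{j-1}}$, the expectation factorizes over $j$, and invoking the defining relation $\IE[\exp(-x\hat{\Lambda}^{(\ell)}_s)]=\exp(-s\hat{\Psi}_\ell(x))$ produces $\exp\!\big(-(t_j-t_{j-1})\hat{\Psi}_\ell(\sum_{k=j}^d\theta_{\ell,k})\big)$. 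Collecting the factors over $\ell$ and $j$ and reinstating the general permutation $\pi$ yields the stated formula.

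The argument is essentially routine once the two independence structures are identified; the only point requiring a little care is the summation rearrangement from $\sum_k\theta_{\ell,k}\hat{\Lambda}^{(\ell)}_{t_k}$ to a sum over increments, which uses crucially the assumed time ordering and explains why the tail-sums $\sum_{k=j}^d\theta_{\ell,\pi(k)}$ appear as the arguments of the $\hat{\Psi}_\ell$'s.
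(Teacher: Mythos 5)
Your proposal is correct and follows essentially the same route as the paper's proof: condition on the factor subordinators to obtain conditional independence, factorize the Laplace transform over the independent factors, decompose each $\hat{\Lambda}^{(\ell)}_{t_k}$ into increments over the ordered grid, and rearrange the double sum so the tail weights $\sum_{k=j}^d\theta_{\ell,\pi(k)}$ enter the Laplace exponents. The only cosmetic difference is that you reduce to $\pi=\mathrm{id}$ by relabeling, whereas the paper carries the permutation explicitly through the rearrangement step $(\ast)$.
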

\begin{proof}
We condition on all $m$ factors, providing conditional independence of the default times, and rewrite each of the L\'evy subordinators that have to be integrated out as a sum of its independent increments. This requires a combinatorial rearrangement to see $(\ast)$. Finally, writing the appearing Laplace transforms in terms of the respective Laplace exponents provides the result.
\begin{align*}
&\IP(\tau_1>t_1,\ldots,\tau_d>t_d)\\
&\quad =\IE\Big[\IP\big(\tau_1>t_1,\ldots,\tau_d>t_d\big|\sigma(\{\hat{\Lambda}^{(\ell)}_t\}: \ell=1,\ldots,m, t\in[0,t_{\pi(d)}])\big)\Big]\\
&\quad =\IE\Big[e^{-\sum_{k=1}^d\Lambda^{(k)}_{t_k}}\Big] =\IE\Big[e^{-\sum_{k=1}^d\sum_{\ell=1}^m\theta_{\ell,k}\hat{\Lambda}^{(\ell)}_{t_k}}\Big] =\IE\Big[e^{-\sum_{\ell=1}^m\sum_{k=1}^d\theta_{\ell,k}\hat{\Lambda}^{(\ell)}_{t_k}}\Big]\\
&\quad =\prod_{\ell=1}^m\IE\Big[e^{-\sum_{k=1}^d\theta_{\ell,k}\hat{\Lambda}^{(\ell)}_{t_k}}\Big] =\prod_{\ell=1}^m\IE\Big[e^{-\sum_{k=1}^d\theta_{\ell,k}\sum_{j=1}^{\pi^{-1}(k)}(\hat{\Lambda}^{(\ell)}_{t_{\pi(j)}}-\hat{\Lambda}^{(\ell)}_{t_{\pi(j-1)}})}\Big]\\
&\quad \stackrel{(\ast)}{=}\prod_{\ell=1}^m\IE\Big[e^{-\sum_{j=1}^d\sum_{k=j}^{d}\theta_{\ell,\pi(k)}(\hat{\Lambda}^{(\ell)}_{t_{\pi(j)}}-\hat{\Lambda}^{(\ell)}_{t_{\pi(j-1)}})}\Big]\\
&\quad =\exp\Big(-\sum_{\ell=1}^m\sum_{j=1}^d\hat{\Psi}_{\ell}\big(\sum_{k=j}^d \theta_{\ell,\pi(k)}\big)(t_{\pi(j)}-t_{\pi(j-1)})\Big).
\end{align*}
\end{proof}
Closely related, a hierarchical and extendible Marshall--Olkin law is constructed in \cite{maiJSCS10,MaiScherer11}. The idea behind is to group the components according to some economic criterion (geographic region, industry segment, etc.). All components are affected by some global factor. Additionaly, group specific factors add further dependence to all components within some group. The result is a hierarchical structure in which the dependence within each group is larger than the dependence between the groups. With regard to our factor structure, this is achieved for $J$ groups and $m=J+1$ subordinators via the weights $\bm{\theta}_{k,j}=(\alpha_j,0,\ldots, 0,\beta_j,0,\ldots,0)\in\IR_+^{J+1}$. 
\begin{remark}[Constructing the full Marshall--Olkin class] The multi-factor L\'evy-frailty construction is general enough to comprise the full family of Marshall--Olkin distributions. To this end, we use $m=2^d-1$ independent killed subordinators $\hat{\Lambda}^{(I)}_t:=\infty\,\I{\{t>E_I\}}$ and $\Lambda^{(k)}_t:=\sum_{I:k\in I}\hat{\Lambda}^{(I)}_t$, which is basically just a complicated way of writing the original Marshall--Olkin shock model (\ref{MO_construct}). This construction is not unique and provides an alternative proof of \cite[Theorem 4.2]{linetsky12}.
\end{remark}
\section{Case study: Simulation bias for selected multi-variate distributions}\label{sec:casestudy}
The case study in this section illustrates how wrong things can go when carelessly assuming the equivalence of the simulation approaches (i) and (ii). This had been pointed out with a numerical example already in \cite{BrigoChourdakisSelfChaining}, but in a more limited context where time steps were all equal and where only the full joint survival was considered. 
To illustrate this effect in our more general framework, considering two default times $(\tau_1,\tau_2)$ is sufficient. An approach commonly used in practice is to model the marginal survival functions $\bar{F}_1$ and $\bar{F}_2$ of $\tau_1$ and $\tau_2$ separately, and link them by a certain survival copula $C$ afterwards, the most prominent example being the Gaussian copula. The marginal laws $F_1$ and $F_2$ are assumed to be exponential in our case study, because the lack-of-memory property of the exponential distribution is a necessary requirement for the validity of the following stepwise simulation algorithm already for the univariate marginals. In order to simulate this bivariate model stepwise, we run the following algorithm:
\begin{algorithm}[Stepwise simulation of bivariate survival indicator process] \label{algoSimu}
\begin{enumerate}
\item Simulate a vector $(X_1,X_2)\sim C(\bar{F}_1,\bar{F}_2)$ and compute the indicator $(I_1,I_2):=(\I{\{X_1>\Delta\}},\I{\{X_2>\Delta\}})$. Set $\textbf{Z}(\Delta):=(I_1,I_2)$.
\item Simulate a vector $(X_1,X_2)\sim C(\bar{F}_1,\bar{F}_2)$ and compute the indicator $(I_1,I_2):=(\I{\{X_1>\Delta\}},\I{\{X_2>\Delta\}})$. Set $\textbf{Z}(2\,\Delta):=(\I{\{Z_1(\Delta)=1,\,I_1=1\}},\I{\{Z_2(\Delta)=1,\,I_2=1\}})$.
\item Simulate a vector $(X_1,X_2)\sim C(\bar{F}_1,\bar{F}_2)$ and compute the indicator $(I_1,I_2):=(\I{\{X_1>\Delta\}},\I{\{X_2>\Delta\}})$. Set $\textbf{Z}(3\,\Delta):=(\I{\{Z_1(2\,\Delta)=1,\,I_1=1\}},\I{\{Z_2(2\,\Delta)=1,\,I_2=1\}})$.
\item ...
\end{enumerate}
\end{algorithm}
The output of this algorithm is interpreted as a (discretized) path $(\textbf{Z}(\Delta),\textbf{Z}(2\,\Delta),\textbf{Z}(3\,\Delta),\ldots)$ of the survival indicator process $\textbf{Z}(t)=(\I{\{\tau_1>t\}},\I{\{\tau_2>t\}})$. However, this is not always appropriate, which is what the present exercise illustrates. In particular, according to Theorem \ref{main_thm} it is appropriate only if the joint distribution of $(\tau_1,\tau_2)$ has a Marshall--Olkin distribution. Let us make two remarks about common sources of errors:
\begin{itemize}
\item Plugging exponential marginals $F_1,F_2$ into an arbitrary Marshall--Olkin copula does not necessarily yield a bivariate Marshall--Olkin distribution. This is a massive difference from the Gaussian world, indicating that separation of marginals and dependence structure is not always straightforward. Indeed, the bivariate Marshall--Olkin distribution has three non-negative parameters $\lambda_{\{1\}},\lambda_{\{2\}},\lambda_{\{1,2\}}$ satisfying $\lambda_{\{i\}}+\lambda_{\{1,2\}}>0,\,i=1,2$. It is divided into two exponential marginals $F_i=Exp(\lambda_{\{i\}}+\lambda_{\{1,2\}})$, $i=1,2$, and survival copula of the form
\begin{gather*}
C(u,v) = \min\Big\{ v\,u^{1-\frac{\lambda_{\{1,2\}}}{\lambda_{\{1\}}+\lambda_{\{1,2\}}}},u\,v^{1-\frac{\lambda_{\{1,2\}}}{\lambda_{\{2\}}+\lambda_{\{1,2\}}}}\Big\},
\end{gather*}
in the sense that $\IP(\tau_1>t_1,\tau_2>t_2)=C\big(\IP(\tau_1>t_1),\IP(\tau_2>t_2)\big)$, with $\IP(\tau_i>t_i)=\exp\big(-(\lambda_{\{i\}}+\lambda_{\{1,2\}})t_i\big)$. In principle, the copula is only two-parametric, namely determined by the two auxiliary parameters
\begin{gather*}
\alpha:=\frac{\lambda_{\{1,2\}}}{\lambda_{\{1\}}+\lambda_{\{1,2\}}} \in [0,1],\quad \beta := \frac{\lambda_{\{1,2\}}}{\lambda_{\{2\}}+\lambda_{\{1,2\}}} \in  [0,1],
\end{gather*} 
as proposed for instance in the textbooks \cite{nelsen99,embrechts05}. Indeed, for each given pair $(\alpha,\beta) \in [0,1]^2$ we can find parameters $(\lambda_{\{1\}},\lambda_{\{2\}},\lambda_{\{1,2\}})$ of a Marshall--Olkin distribution yielding the desired pair $(\alpha,\beta)$. But then, in order for the joint law of $(\tau_1,\tau_2)$ to be of a proper Marshall--Olkin kind, the exponential rates of the marginals are restricted to the values $\lambda_{\{i\}}+\lambda_{\{1,2\}}$, for admissible Marshall--Olkin parameters $\lambda_{\{1\}},\lambda_{\{2\}},\lambda_{\{1,2\}}$ matching the given $(\alpha,\beta)$. If not, we obtain a multi-variate distribution violating the lack-of-memory property, and therefore the multi-variate indicator process looses the Markov property.
\item The article \cite{BrigoChourdakisSelfChaining} finds that for Algorithm \ref{algoSimu} to yield an unbiased sample of the survival indicator path, the copula $C$ needs to be a so-called extreme-value (also called self-chaining)  copula, i.e.\ it satisfies $C(u^t,v^t)=C(u,v)^t$ for each $t \geq 0$. This family of copulas is a proper superclass of the family of Marshall--Olkin copulas, and as a prominent example it also includes the Gumbel copula. However, our simulation study shows that this holds only for equal time steps and only until a first default happens. After that, this result is no longer valid and the Gumbel copula leads to simulation biases as well. The difference of our more general approach with respect to \cite{BrigoChourdakisSelfChaining} is that here we consider  the bivariate survival indicator process $\textbf{Z}(t)=(\I{\{\tau_1>t\}},\I{\{\tau_2>t\}})$ rather than  the one-dimensional survival indicator $\I{\{\tau_1>t,\tau_2>t\}}$. The latter is Markovian, and hence stepwise simulation feasible, if and only if $\min\{\tau_1,\tau_2\}$ is exponential. This property is indeed satisfied by all extreme-value/self-chaining copulas and even by the larger class of copulas obtained from multi-variate laws with exponential minima, see \cite{esary74}.   
\end{itemize}
\subsection{The case study}
We consider a random vector $(\tau_1,\tau_2)$ with bivariate survival copula $C$ and exponential margins with parameters $\lambda_1,\lambda_2$. We compute the probability $\IP(\tau_1>T,\tau_2>S)$ for $T=S=10$ and for $S=T/2=5$ in three different ways:
\begin{itemize}
\item[(a)] Exactly, using the formula $C\big(\exp({-\lambda_1 T}),\exp({-\lambda_2 S})\big)$.
\item[(b)] Via $n$ iid simulations of $(\I{\{\tau_1>T\}},\I{\{\tau_2>S\}})$ and the empirical frequency. This direct simulation approach is only included in order to test the validity of our simulation engine.
\item[(c)] Via $n$ iid simulations of $(\I{\{\tau_1>T\}},\I{\{\tau_2>S\}})$ and the empirical frequency, where the simulation of $(\I{\{\tau_1>T\}},\I{\{\tau_2>S\}})$ is carried out stepwise in two steps, i.e.\ by Algorithm \ref{algoSimu} with $\Delta=5$. We seek to illustrate that this is only justified for extreme-value copulas in the case $T=S$ and only for Marshall--Olkin distributions for $T \neq S$.
\end{itemize}
The above computations are carried out for three different survival copulas $C$:
\begin{itemize}
\item[(1)] A Marshall--Olkin copula $C_{\alpha}(u_1,u_2)=\min(u_1,u_2)\,\max(u_1,u_2)^{1-\alpha}$, which is such that the resulting joint distribution function is a proper bivariate Marshall--Olkin distribution.
\item[(2)] A Gumbel copula $C_{\theta}(u_1,u_2)=\exp(-(\log(1/u_1)^{1/\theta}+\log(1/u_2)^{1/\theta})^\theta)$, which is an extreme-value copula and, at the same time,  an Archimedean copula. Here, we expect method (c) to fail in the case $S<T$ (because the bivariate survival indicator process is not Markovian) but to work in the case $S=T$ (because the one-dimensional survival indicator process $\I{\{\min\{\tau_1,\tau_2\}>t\}}$ is Markovian by the extreme-value copula property).
\item[(3)] A Gaussian copula $C_{\rho}(u_1,u_2)=\mathcal{N}_{2}(\Phi^{-1}(u_1),\Phi^{-1}(u_2);\rho)$, where $\mathcal{N}_{2}(.,.;\rho)$ is the cdf of the bivariate normal distribution with mean vecor zero and correlation $\rho\in(-1,1)$, $\Phi^{-1}(.)$ the quantile of the univariate standard normal distribution. Here, we expect method (c) to fail always, since the Gaussian copula is not even an extreme-value copula.
\end{itemize}
We set the global parameters to $n = 1\,000\,000$ samples for methods (b) and (c) and the exponential rates of the marginals $\lambda_1=\lambda_2=0.1$. The parameters of the copulas are $\alpha=2/3,\,\theta=0.5$, and $\rho = 1/\sqrt{2}$, implying that all three copulas have a Kendall's Tau of $0.5$, which follows from the formulas in \cite[Example 1.11]{maibook12}, \cite[Example 6.7]{embrechts01}, and \cite[p.\ 215 ff]{embrechts05}, respectively. The results of our simulation are provided in Table \ref{table_case_study}. It can be observed from Table \ref{table_case_study} that in the case $T=S$ Algorithm \ref{algoSimu} is exact\footnote{We call a simulation result ``exact'' if the relative error of our empirical estimator, based on the $n=1\,000\,000$ simulations, is smaller than $0.5 \%$.} for the Gumbel copula and for the Marshall--Olkin distribution. The reason for this is that both underlying distributions are min-stable multi-variate exponential distributions, implying that $\min\{\tau_1,\tau_2\}$ is exponential, and therefore the one-dimensional indicator process $\I{\{\tau_1>t,\tau_2>t\}}$ is Markovian. However, if $S \neq T$ Algorithm \ref{algoSimu} is only exact for the Marshall--Olkin distribution, because the bivariate survival indicator process is not Markovian in the Gumbel case. As expected, for the Gaussian copula Algorithm \ref{algoSimu} is strongly biased, because it simply is wrong.
\par
\begin{table}
	\centering
		\begin{tabular}{llll}
		copula &exact value (a)&method (b)& method (c)\\
		\hline 
		$S=T=10$\\
\hline
		Marshall--Olkin	&0.26360 &0.26356 (0.013 \%)& 0.26288 (0.271 \%)\\
		Gumbel 					&0.24312  &0.24275 (0.150 \%)& 0.24238 (0.302 \%)\\
		Gaussian 				&0.14542 &0.14521 (0.151 \%)&\textbf{0.14309} (\textbf{1.604} \%)\\
		\hline
		$S=T/2=5$\\
\hline
	  Marshall--Olkin	&0.31140 &0.31215 (0.241 \%)&0.31154 (0.043 \%)\\
		Gumbel 					&0.32692 &0.32696 (0.010 \%)&\textbf{0.29916} (\textbf{8.491} \%)\\
		Gaussian 				&0.32908 &0.32932 (0.073 \%)&\textbf{0.29504} (\textbf{10.344} \%)\\
		\hline
		\end{tabular}
	\caption{Results of our case study, with relative errors with respect to method (a) in parentheses. The numerical results of the cases where the respective simulation approach implies a bias are displayed in bold.}
	\label{table_case_study}
\end{table}

\section{Conclusion}
The industry practice of economic scenario generation, involving dependent default times, is critically reviewed. As a possible trade-off between realistic stylized facts, practical demands, and mathematical viability, the class of default models with a Markovian survival indicator process is discussed. The ``looping default'' model, an example from this class, is linked to the classical ``Freund distribution'' and a new construction (with immediate multi-variate extensions) based on Markov chains is given. If additionally all sub-vectors of the survival indicator process are Markovian, this constitutes a new characterization of the Marshall--Olkin law. En passant, this shows that the model features ``consistent marginalization'' and ``nested sub-distributions'' that are still of Marshall--Olkin type, connecting multi-variate lack-of-memory with consistent nested marginalization. For this model, we present an efficient and unbiased simulation scheme based on a multi-factor L\'evy-frailty construction. Throughout the paper and within a numerical case study, we work out different pitfalls of simulating dependent default times, giving a word of caution on inadequate approaches that are still used in the financial industry. It is simply not possible to iterate default time simulation while preserving the dependence structure, unless the default times are jointly distributed with a Marshall--Olkin law and, in particular, it is wrong to iterate a Gaussian copula on the default times while assuming the final overall default monitoring to be consistent with a one-period Gaussian copula.

%
%
%
% References
%
%
%

\end{document}